\newcommand{\N}{\mathbb{N}}
\newcommand{\R}{\mathbb{R}}
\DeclareSymbolFont{bbold}{U}{bbold}{m}{n}
\DeclareSymbolFontAlphabet{\mathbbold}{bbold}
\newcommand{\vect}[1]{\mathbbold{#1}}
\newcommand{\setdef}[2]{\left\{#1 \; | \; #2\right\}}
\newcommand{\subscr}[2]{#1_{\textup{#2}}}
\newcommand{\Sd}{\subscr{\mathbb{S}}{d}}
\newcommand{\Sdpp}{\subscr{\mathbb{S}}{d}^{\operatorname{++}}}
\newcommand{\vecc}{\textbf{\textup{vec}}}
\DeclareMathOperator\arctanh{arctanh}
\def\limn{\lim_{n \rightarrow \infty}}
\DeclareSymbolFont{bbold}{U}{bbold}{m}{n}
\DeclareSymbolFontAlphabet{\mathbbold}{bbold}
\def\limn{\lim_{n \rightarrow \infty}}
\newcommand{\multiline}[1]{%
  \begin{tabularx}{\dimexpr\linewidth-\ALG@thistlm}[t]{@{}X@{}}
    #1
  \end{tabularx}
}
\begin{document}

\title{Family-wise error rate control in Gaussian graphical model selection via Distributionally Robust Optimization
}

\author[1]{Chau Tran}

\author[2]{Pedro Cisneros-Velarde}

\author[1,4]{Sang-Yun Oh*}

\author[3]{Alexander Petersen}

\authormark{Tran \textsc{et al}}

\address[1]{\orgdiv{Department of Statistics and Applied Probability}, \orgname{University of California Santa Barbara}, \orgaddress{\state{CA}, \country{US}}}

\address[2]{\orgdiv{Department of Computer Science}, \orgname{University of Illinois at Urbana-Champaign}, \orgaddress{\state{IL}, \country{US}}}

\address[3]{\orgdiv{Department of Statistics}, \orgname{Brigham Young University}, \orgaddress{\state{UT}, \country{US}}}

\address[4]{\orgdiv{Scientific Data Division}, \orgname{Lawrence Berkeley National Laboratory}, \orgaddress{\state{CA}, \country{US}}}

\corres{*Sang-Yun Oh\\ \email{syoh@ucsb.edu}}

\presentaddress{Department of Statistics \& Applied Probability\\
University of California\\
Santa Barbara, CA 93106-3110}

\abstract[Summary]{Recently, a special case of precision matrix estimation based on a distributionally robust optimization (DRO) framework has been shown to be equivalent to the graphical lasso. 
From this formulation, a method for choosing the regularization term, i.e., for graphical model selection, was proposed. 
In this work, we establish a theoretical connection between the confidence level of graphical model selection via the DRO formulation and the asymptotic family-wise error rate of estimating false edges. Simulation experiments and real data analyses illustrate the utility of the asymptotic family-wise error rate control behavior even in finite samples.}

\keywords{Gaussian Graphical Model, graphical lasso, Distributionally Robust Optimization, Family-Wise Error Rate}

\maketitle

\section{Introduction}\label{sec1}

The estimation of the precision matrix $\Omega=\Sigma^{-1}$ of a Gaussian random vector $X \in \R^d$ with covariance matrix $\Sigma$ is a problem that has received much attention in statistics and machine learning \citep{dempster1972, Drton-Perlman, MY-LY:07, Drton2017}. The matrix $\Omega$ characterizes the \emph{conditional dependency} structure between variables.  If a random variable $X$ follows a normal distribution, $\Omega_{jk} = 0$ if and only if the $j$-th and $k$-th variables of $X$ are conditionally independent given the rest \citep{lauritzen1996}. 

Naturally, an $\ell_1$-regularized maximum likelihood approach that introduces sparsity in the estimation of $\Omega$ was proposed by \citep{MY-LY:07}. The approach will be referred to by the name of a well-known computational algorithm, graphical lasso \citep{JF-TH-RT:07}. The resulting sparsity pattern from graphical lasso can be then used to construct a graphical model, $G=(V,E)$, where $V$ is the set of nodes for each of the $d$-variables, and $E$ is the set of undirected edges: each edge $(i,j)$ represents a non-zero element for $i$-th and $j$-th variables in $\Omega$. Graphical lasso subsequently spurred significant research effort in methodological development as well as application domains \citep{guillot2015,HUANG2010935,Krumsiek2011}. As with most other learning methods, the performance of graphical lasso depends on a user-specified tuning parameter; however, tuning the sparsity-inducing regularization parameter of graphical lasso --- also called \emph{graphical model selection} --- is often challenging for various reasons.

In practice, procedures such as cross-validation (CV) and Bayes information criterion (BIC) minimization are often used to tune graphical lasso; however, they tend to overfit in simulation experiments \citep{hastie2009,liu2010}. Furthermore, CV and BIC minimization are computationally demanding because they search over a grid of candidate parameters. Moreover, asymptotic properties in the literature are often not beneficial for regularization parameter tuning in finite sample regimes. As a result, using graphical lasso in real applications is often met with significant computational and statistical subtleties, and, hence, practitioners sometimes resort to manual tuning in order to obtain an estimate of $\Omega$ with a targeted number of non-zeros.

Recently, \cite{cisneros20a} has formulated the precision matrix estimation problem using the distributionally robust optimization (DRO) framework \citep{VAN-DK-PME:18, JB-NS:19}. The authors establish the correspondence between the radius of the \emph{ambiguity set} in the DRO framework --- which measures the uncertainity around the empirical measure (see more below) --- and the regularization parameter of graphical lasso estimator. The authors leveraged this connection to propose a \emph{robust selection} (RobSel) algorithm that, given a confidence level $1-\alpha$, determines the corresponding regularization parameter for graphical lasso.

Our work theoretically relates the RobSel error tolerance $\alpha$ to the asymptotic family-wise error rate (FWER) for estimating any false positive non-zero in $\Omega$. The practical significance of our work is that graphical lasso regularization can be chosen according to a user specified FWER level. We illustrate the theoretical result in simulation and compare the similarity between RobSel chosen graphs and graphs estimated by a hypothesis testing-based procedure for graphical model selection. We confirm that choosing graphical lasso regularization parameter with RobSel can still yield a consistent family-wise error rate characteristic in finite samples.

\section{DRO formulation and family-wise error rate of graphical lasso}\label{sec2}

Distributionally robust optimization (DRO) as an estimation framework seeks parameters that minimize the worst expected risk over the uncertainty set of distributions (often called \emph{ambiguity set} in DRO terminology). Readers are referred to a review article by \cite{kuhn2019} for an overview of the DRO. Leveraging the DRO framework, \cite{cisneros20a} showed that for a fixed $\rho \ge 1$ and $p \in [1,\infty]$, their DRO formulation of regularized inverse covariance estimation is equivalent to the following expression:
\begin{align}
\label{eq1}
\min_{K\in\Sdpp}\left\{\trace(KA_n)-\log|K|+\delta^{1/\rho}\norm{\vecc(K)}_p\right\},
\end{align}
where $\Sdpp$ denotes the set of $d \times d$ positive definite matrices, and $\delta$ is the radius of ambiguity set, which is constructed as a ball in the Wasserstein space of distributions, centered at the empirical measure of the data. Note that graphical lasso objective function is a special case of \eqref{eq1} when $p=1$ and $\rho=1$. Constants $p$ and $\rho$ specify the Wasserstein distance metric between two probability distributions \cite[see][for details]{cisneros20a}. Remarkably, the regularization parameter of graphical lasso corresponds to the ambiguity set radius $\delta$ despite the differing premise between DRO and maximum likelihood estimator. Intuitively, an increase in ambiguity set radius $\delta$ (i.e., an increased robustness in DRO) corresponds to an increased amount of regularization in graphical lasso (which results in conservative selection of non-zeros).

Using the \emph{Robust Wasserstein Profile (RWP) function} $R_n$ introduced by~\cite{blanchet_kang_murthy_2019}, \cite{cisneros20a} derived the RWP function for graphical lasso, $R_{n}(K) = \norm{\vecc(A_n-K)}_\infty$, and characterized its asymptotic distribution. The distribution is used to determine $\delta$ \citep[equivalently, the regularization parameter $\lambda$ in graphical lasso][]{JF-TH-RT:07} given the user specified error tolerance level $\alpha$:
\begin{align}
\label{oii2}
\lambda = \delta := 
\inf\setdef{\delta>0}{\mathbb{P}_0(R_n(\Omega)\leq \delta)}
=\inf\setdef{\delta>0}{\mathbb{P}_0(\norm{\vecc(A_n-\Sigma)}_\infty\leq \delta) \geq
1-\alpha},
\end{align}
where $\mathbb{P}_0$ denotes the true underlying distribution of the data.   
This graphical model selection procedure is called \emph{RobSel} in~\citep{cisneros20a}. 
Then, by Corollary 3.3 of \cite{cisneros20a}, $n^{1/2}\delta$ tends to $1-\alpha$ quantile of $R_n$, $r_{1-\alpha}$, and the corresponding $\delta$ can be determined from an order statistic in finite sample. The asymptotic result also motivates the approximation of the RWP function through a bootstrap procedure in Algorithm \ref{alg:robsel} to determine the regularization parameter $\lambda$, given significance level $\alpha$.

\subsection{Family-wise error rate control with RobSel}

\begin{algorithm}[Ht!]
\caption{RobSel algorithm for estimation of the regularization parameter $\lambda$~\citep{cisneros20a} \label{alg:robsel}}
\begin{algorithmic}
    \State Input: $n$ observations, $X_{1},\ldots, X_{n}$.
    \State Set parameters $ \alpha \in (0,1)$ and $B \in \N$.
    \State Compute empirical covariance $A_n$.
    \For{$b=1,...,B$}
        \State Obtain a bootstrap sample $X_{1b}^*,\ldots, X_{nb}^*$ by sampling uniformly and with replacement from the data
        \State Compute empirical covariance $A^*_{n,b}$ from the bootstrap sample.
        \State $R^*_{n,b} \gets \norm{A^*_{n,b}-A_n}_\infty$
    \EndFor
    \State Set $\lambda$ to be the bootstrap order statistic $R^*_{n, ((B+1)(1-\alpha))}$.
\end{algorithmic}
\end{algorithm}
In this next section, we provide results for the interpretation of $\alpha$ and its relation to type I error control in graphical model selection. Recall that equation \eqref{eq1} shows that the DRO estimator is equivalent to the $\ell_1$-penalized estimator in graphical lasso, which produces a sparse estimator of $\Omega,$ denoted $\hat{\Omega}^\delta$. Given equation \eqref{oii2}, a natural question is how to interpret error tolerance $\alpha$, which was not addressed in \cite{cisneros20a}. The following result directly connects parameter $\alpha$ in RobSel and the asymptotic FWER of the corresponding obtained estimator.
\begin{theorem}[FWER of graphical lasso]
\label{thm-FWER} 
Let $\Xi = \{(i,j): \Omega_{ij} = 0\}$ be the indices corresponding to zero entries of $\Omega.$  For a fixed $\alpha,$ let $\delta$ satisfy \eqref{oii2} and let $\hat{\Omega}^\delta$ be the unique solution to optimization problem~\eqref{eq1} with $\rho=1$. Then
\begin{align}
\label{FWER}
\limn \mathbb{P}(\hat{\Omega}^\delta_{ij} \neq 0 \textrm{ for some } (i,j) \in \Xi) \leq \alpha. 
\end{align}
\end{theorem}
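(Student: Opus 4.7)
The plan is to combine a primal-dual witness (PDW) construction with the calibration of $\delta$ prescribed by \eqref{oii2}. Set $E_n=\{R_n(\Omega)\leq\delta\}=\{\|\vecc(A_n-\Sigma)\|_\infty\leq\delta\}$; by the definition of $\delta$ in \eqref{oii2}, $\mathbb{P}_0(E_n)\geq 1-\alpha$ for every $n$. It therefore suffices to show that the probability of any false positive occurring on $E_n$ is $o(1)$ as $n\to\infty$, since then
\begin{equation*}
\mathbb{P}_0\!\left(\exists\,(i,j)\in\Xi:\hat{\Omega}^\delta_{ij}\neq 0\right)\leq \mathbb{P}_0(E_n^c)+o(1)\leq \alpha+o(1).
\end{equation*}

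First I would write out the KKT conditions for \eqref{eq1} with $\rho=1$: at $\hat{\Omega}^\delta$ there exists a subgradient matrix $\hat{Z}$ with $\hat{Z}_{ij}\in\partial|\hat{\Omega}^\delta_{ij}|$ and $A_n-(\hat{\Omega}^\delta)^{-1}+\delta\hat{Z}=0$, so $\hat{Z}_{ij}=\sign(\hat{\Omega}^\delta_{ij})$ whenever $\hat{\Omega}^\delta_{ij}\neq 0$ and $|\hat{Z}_{ij}|\leq 1$ otherwise. I would then introduce the oracle $\tilde{\Omega}$, defined as the unique minimizer of \eqref{eq1} subject to the additional linear constraints $K_{ij}=0$ for every $(i,j)\in\Xi$; uniqueness follows from strict convexity of $-\log|\cdot|$ on $\Sdpp$. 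By construction $\tilde{\Omega}_{ij}=0$ on $\Xi$, so the entire PDW argument reduces to verifying the off-support feasibility
\begin{equation*}
|(A_n-\tilde{\Omega}^{-1})_{ij}|\leq\delta\qquad\text{for all }(i,j)\in\Xi.
\end{equation*}
Whenever this holds, $\tilde{\Omega}$ satisfies every KKT condition of the unconstrained problem \eqref{eq1}, uniqueness then forces $\hat{\Omega}^\delta=\tilde{\Omega}$, and $\hat{\Omega}^\delta$ vanishes on all of $\Xi$, delivering the probability bound above.

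The hard part is establishing the off-support feasibility with the correct constant. Decomposing
\begin{equation*}
(A_n-\tilde{\Omega}^{-1})_{ij}=(A_n-\Sigma)_{ij}+(\Sigma-\tilde{\Omega}^{-1})_{ij},
\end{equation*}
the first summand is bounded by $\delta$ on $E_n$, but the triangle inequality controls the second summand only at scale $O_P(\delta)$, so that the sum is bounded by $\delta(1+O_P(1))$ rather than by $\delta$ itself. A first-order expansion $\tilde{\Omega}^{-1}\approx\Sigma-\Sigma(\tilde{\Omega}-\Omega)\Sigma$ rewrites the second summand as a linear functional of $(\tilde{\Omega}-\Omega)$ restricted to the true support $S=\Xi^c$; the restricted KKT conditions pin the magnitude of those on-support errors at $O_P(\delta)$, so what is actually needed is that their propagation through $\Sigma$ to entries $(i,j)\in\Xi$ is $o_P(\delta)$ rather than the generic $O_P(\delta)$. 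This is exactly where a mutual-incoherence / irrepresentability-type condition on $\Omega$ (in the spirit of Ravikumar--Wainwright--Raskutti--Yu's graphical-lasso consistency results) enters, and I would check that the regularity assumptions implicit in Corollary~3.3 of \cite{cisneros20a} are strong enough to supply it, yielding $\limn\mathbb{P}_0(\cdot)\leq\alpha$.
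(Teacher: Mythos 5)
Your overall skeleton --- bounding the FWER by $\mathbb{P}_0(E_n^c)$ plus the probability of a false positive on $E_n$ --- is reasonable, but the step you yourself flag as ``the hard part'' is a genuine gap, and the fix you sketch cannot close it. First, a mutual-incoherence/irrepresentability condition is not among the hypotheses of the theorem, and Corollary~3.3 of \cite{cisneros20a} cannot supply one: that corollary is a distributional limit for the RWP function $\sqrt{n}\norm{\vecc(A_n-\Sigma)}_\infty$ and imposes no structural condition on $\Sigma\otimes\Sigma$. Second, even granting irrepresentability with constant $\gamma$, the primal--dual witness machinery certifies strict dual feasibility off the support only when the noise $\norm{\vecc(A_n-\Sigma)}_\infty$ is a small fraction of the regularization level (of order $\gamma\delta$), precisely so that the $O_P(\delta)$ propagation term you identify can be absorbed. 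RobSel calibrates $\delta$ to be (asymptotically) the $1-\alpha$ quantile of that very noise, so on $E_n$ the noise can be as large as $\delta$ itself; your feasibility check then reads $\delta+O(\delta)\le\delta$ and fails. Consequently the claim that false positives on $E_n$ occur with probability $o(1)$ is not deliverable by this route, and without it your first display gives a bound strictly worse than $\alpha$ (something like $\mathbb{P}_0(\norm{\vecc(A_n-\Sigma)}_\infty>c\gamma\delta)$ for a small constant $c$).

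The paper avoids finite-sample support-recovery arguments entirely. Working in the fixed-$d$ regime, it uses $n^{1/2}\delta\to r_{1-\alpha}$ together with Theorem~1 of \cite{MY-LY:07} to conclude that $n^{1/2}(\hat\Omega^\delta-\Omega)$ converges in distribution to the minimizer $U^*$ of a quadratic-plus-$\ell_1$ limiting objective driven by a Gaussian matrix $H$, the distributional limit of $\sqrt{n}(A_n-\Sigma)$. The subgradient equation $H+2\Sigma U^*\Sigma+r_{1-\alpha}Z=0$ for that limiting problem is then used to bound $\mathbb{P}(U^*_\Xi\neq 0)\le \mathbb{P}(\norm{\vecc(H)}_\infty>r_{1-\alpha})=\alpha$, with no incoherence assumption and no requirement that the sampling noise be dominated by $\delta$. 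The moral is that the theorem is not a support-recovery statement at all: it does not assert that false positives vanish on the high-probability event, only that the limiting law of the estimator places mass at most $\alpha$ on the set where some entry indexed by $\Xi$ is nonzero, and that is most naturally extracted from the asymptotic optimization problem rather than from a finite-sample witness construction.
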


\begin{proof}
In this proof, let $\mathbb{S}_{\mathrm{d}}$ be the set of $d\times d$ symmetric matrices. 
Recall that $n^{1/2}\delta \rightarrow r_{1 - \alpha},$ where $r_{1 - \alpha}$ is the $1 - \alpha$ quantile of the distribution in Corollary 3.3 and Remark 3.5 of \cite{cisneros20a}. Then, by Theorem 1 of \cite{MY-LY:07}, we have that $n^{1/2}(\hat{\Omega}^\delta - \Omega)$ converges in distribution to $U^*$, the minimizer of
$$
\arg \min_{U = U'}\,\, \trace(U\Sigma U \Sigma) + \trace(UH) + r_{1 - \alpha}\sum_{i \neq j}\left\{u_{ij}\mathrm{sign}(\Omega_{ij})\mathbf{1}(\Omega_{ij} \neq 0) + |u_{ij}|\mathbf{1}(\Omega_{ij} = 0)\right\},
$$
where $H \in \mathbb{S}_{\mathrm{d}}$ is a matrix of jointly Gaussian random variables with zero mean such that $\operatorname{Cov}\left(h_{i j}, h_{k \ell}\right)=E\left[x_{i} x_{j} x_{k} x_{\ell}\right]-\Sigma_{i j} \Sigma_{k \ell}$. By the convex nature of the above optimization problem, using the first optimality criterion using subdifferentials~\citep[Corollary~2.7]{FHC-YSL-RJS-PRW:98},  it follows that there exists some $Z \in \Sd$ satisfying
$$
Z_{ij} = \left\{\begin{array}{ll} 0, & i = j, \\ \mathrm{sign}(\Omega_{ij}), & i \neq j, \Omega_{ij} \neq 0, \\ \mathrm{sign}(u_{ij}), &i \neq j, \Omega_{ij} = 0, u_{ij} \neq 0,\\ \in [-1,1], & i \neq j, \Omega_{ij} = u_{ij} = 0. \end{array} \right.
$$
for which $H + 2\Sigma U^* \Sigma + r_{1 - \alpha}Z = 0.$ Letting $\otimes$ denote the matrix Kronecker product and $\Gamma = \Sigma \otimes \Sigma$, it follows that
$$
\vecc(U^*) = -\frac{1}{2}\Gamma^{-1}\left\{\vecc(H) + r_{1 - \alpha}\vecc(Z)\right\}.
$$
Finally, let $\hat{\Omega}_{\Xi}^\delta$ denote the vector of elements of $\hat{\Omega}^\delta$ whose indices are in $\Xi$, $\Omega_{\Xi}$ denote the vector of elements of $\Omega$ whose indices are in $\Xi$ (so it is the zero vector), and 
$U^*_{\Xi}$ denote the vector of elements of $U^*$ whose indices are in $\Xi$. Then one concludes that
\[
\begin{split}
    \lim_{n \rightarrow \infty} \mathbb{P}(\hat{\Omega}^\delta_{ij} \neq 0 \textrm{ for some } (i,j) \in \Xi) &= \limn \mathbb{P}(\sqrt{n}(\hat{\Omega}^\delta_{\Xi} - \Omega_{\Xi}) \neq 0) = \mathbb{P}(U^*_{\Xi} \neq 0) \\
    &\leq \mathbb{P}(U^* \neq 0) = 1 - \mathbb{P}(H \neq -r_{1 - \alpha}Z) \leq 1 - \mathbb{P}(\norm{\vecc(H)}_\infty \leq r_{1 - \alpha}) \\
    &=\alpha.
\end{split}
\]
\end{proof}

Using the estimated regularization parameter $\lambda(\alpha)$ from RobSel for graphical lasso, Theorem \ref{thm-FWER} states that the asymptotic probability that the estimated graph includes a false edge (false non-zero estimated in $\hat\Omega^\delta$) is bounded by $\alpha$. This interpretation is equivalent to having the FWER bounded by $\alpha$ in hypothesis testing-based graphical model selection in \cite{Drton-Perlman}. As a result, Theorem \ref{thm-FWER} implies that RobSel can also serve as a tool for controlling graphical lasso's FWER at some chosen significance level $\alpha$ with similar to using a hypothesis testing-based graphical model selection.

Concretely, testing $d(d-1)/2$ null hypotheses that each pairwise partial correlation is zero can serve as an alternative way to construct a graphical model, where the partial correlation between variables $i$ and $j$ is defined as $\rho_{i j{\boldmath\cdot}\text{rest}}=-\omega_{i j}/\sqrt{\omega_{i i} \omega_{j j}}$ and $i,j=1,2,\dots,d$. The unadjusted $p$-value $\pi_{ij}$ for each null hypothesis is be obtained by 
\begin{equation}
    \label{unadjusted-p-value}
    \pi_{ij} = 2[1-\Phi(\sqrt{n - d - 1}\cdot \abs{z_{ij{\boldmath\cdot}\text{rest}}})],
\end{equation}
where $\Phi$ is the CDF of standard normal distribution, $z_{ij{\boldmath\cdot}\text{rest}} = \arctanh(r_{ij{\boldmath\cdot}\text{rest}})$ is the Fisher's $z$-transformed sample partial correlation $r_{ij{\boldmath\cdot}\text{rest}}$ for population partial correlation $\rho_{i j{\boldmath\cdot}\text{rest}}$. To account for multiple comparison, a p-value correction is needed to achieve a desired FWER characteristic. One of the multiple testing correction methods given in \cite{Drton-Perlman} controls the FWER based on Holm's approach for $p$-value adjustment: 
\begin{align}
    \pi_{a\uparrow}^{\text{Holm}} = \max_{b=1,...,a}\left[\min\left\{\left(\binom{d}{2} - b + 1\right)\pi_{b\uparrow}, 1 \right\}\right]\text{, for } 1 \le a \le \binom{d}{2}.
\end{align}
where $\pi_{1\uparrow} \le \pi_{2\uparrow} \le ... \le \pi_{d(d-1)/2\uparrow}$ are the ordered $p$-values from \eqref{unadjusted-p-value}. This approach will be referred to as the Holm-corrected testing method for graphical model selection in our numerical experiments. Other multiple testing correction approaches discussed in \cite{Drton-Perlman} include Bonferroni and Šidák adjustments. For the remainder of our work, we compare RobSel with the Holm-corrected testing method for its simplicity (compared to the Šidák-based approach) and better power characteristic (compared to the Bonferroni-based approach). We emphasize that the distinct advantage of graphical lasso is that it can can perform model selection and parameter estimation of $\Omega$ simultaneously, whereas any testing-based approach can only identify the zeros/non-zero locations of $\Omega$.


\section{Numerical simulations and real data analysis}\label{sec3}

In this section, analyses of simulated and real data illustrate the usefulness of RobSel's asymptotic FWER property in finite sample and compare to the Holm-based multiple testing approach for Gaussian graphical model selection. Furthermore, RobSel is used to analyze real datasets from genomics. 

To carry out our numerical experiments, we used packages \texttt{CVglasso} for cross validation, \texttt{qgraph} for the extended Bayesian information criterion, and \texttt{robsel} for Robust Selection. These packages are from CRAN, and they use package \texttt{glasso} to estimate the sparse inverse covariance matrix. Robust Selection algorithm is also available as a Python package, \texttt{robust-selection}, at \url{https://pypi.org/project/robust-selection/}. Both Python and R packages are also available at \url{https://github.com/dddlab/robust-selection}, and the codes to reproduce the numerical results is available at \url{https://github.com/cbtran/robsel-reproducible}.

\subsection{Simulation experiments}\label{sec:setting}

\begin{figure}[Ht!]
\centering
     \begin{subfigure}[b]{0.33\textwidth}
         \centering
         \includegraphics[width=\textwidth]{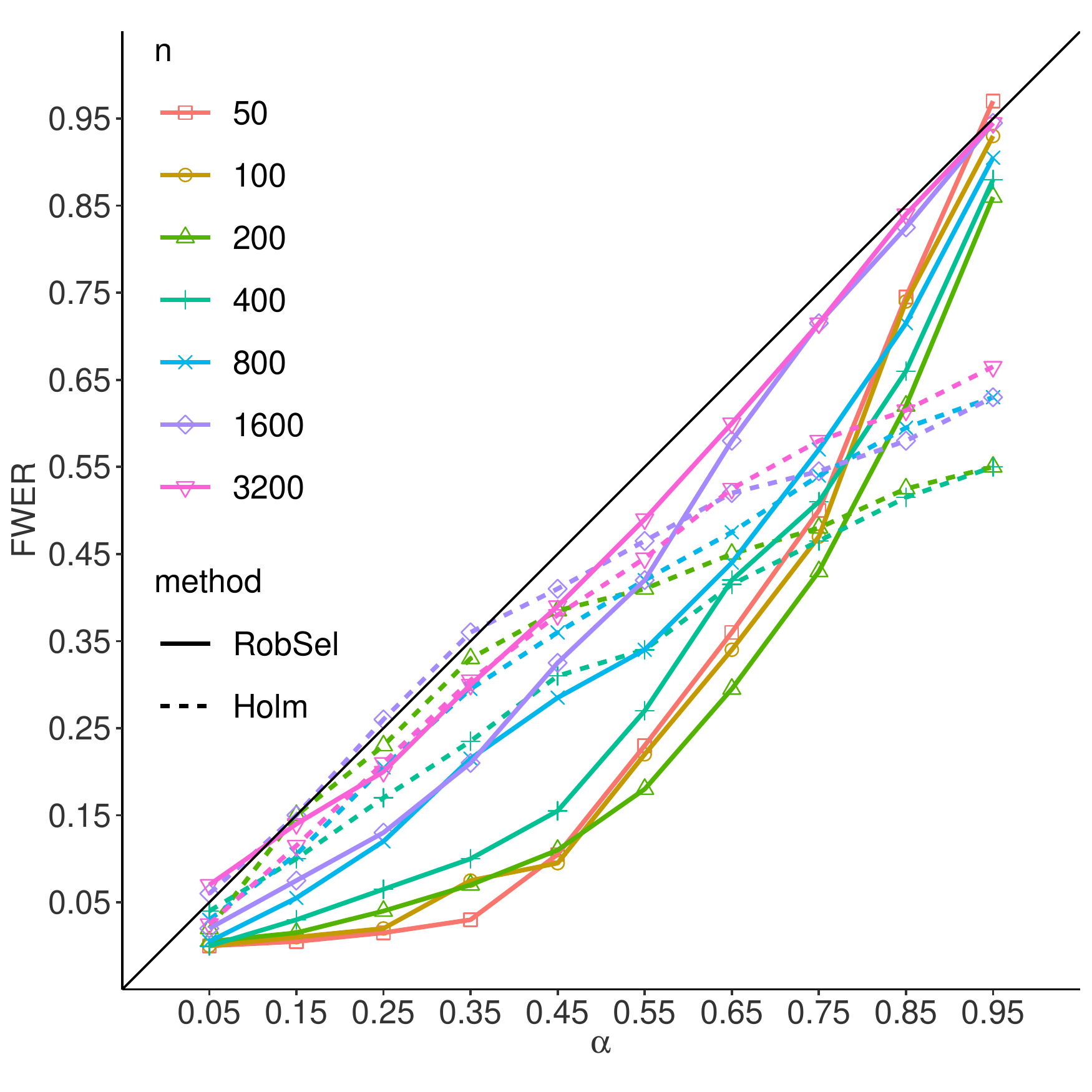}
     \end{subfigure}
     \begin{subfigure}[b]{0.33\textwidth}
         \centering
         \includegraphics[width=\textwidth]{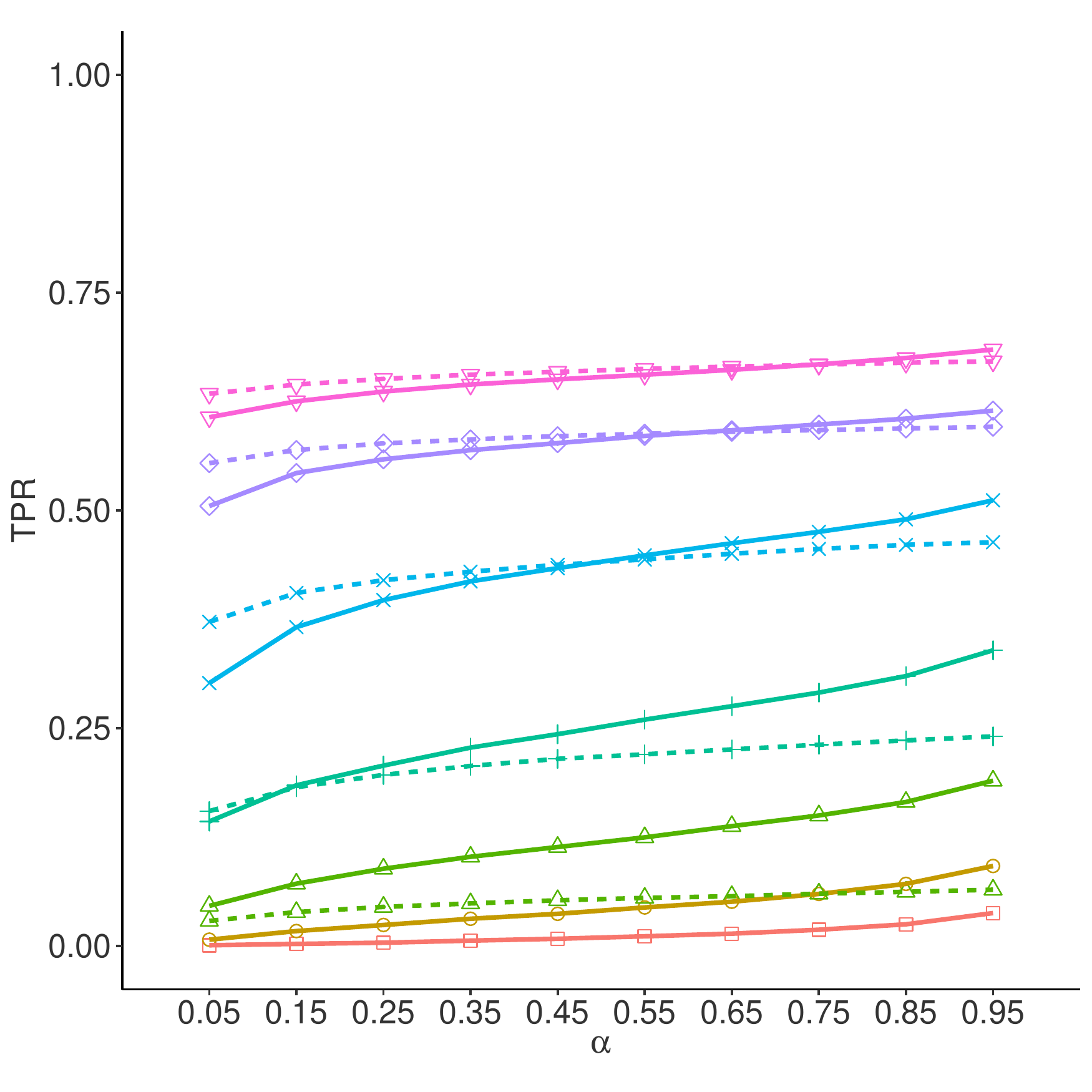}
     \end{subfigure}
     \begin{subfigure}[b]{0.33\textwidth}
         \centering
         \includegraphics[width=\linewidth]{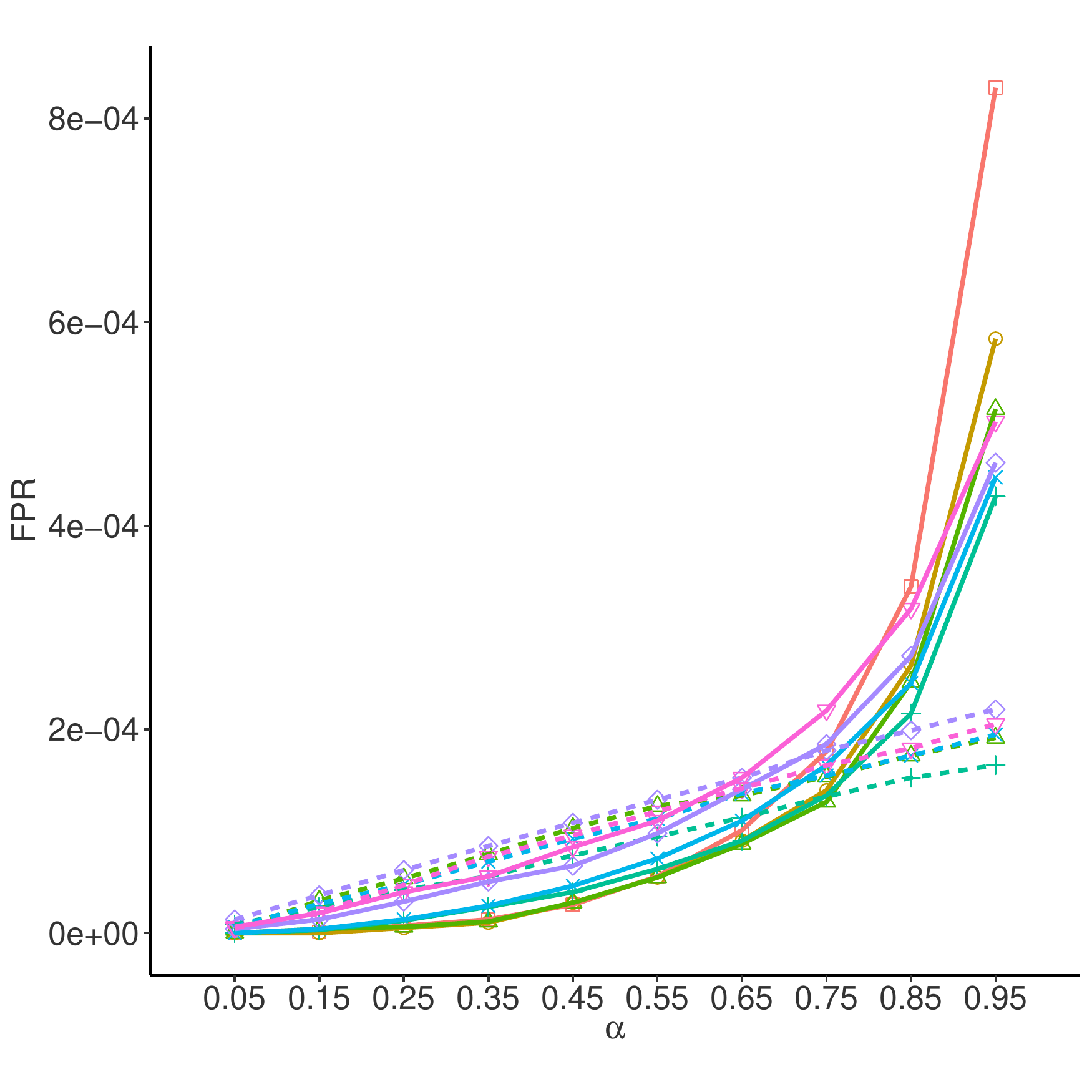}
    \end{subfigure}
    \begin{subfigure}[b]{0.33\textwidth}
         \centering
         \includegraphics[width=\linewidth]{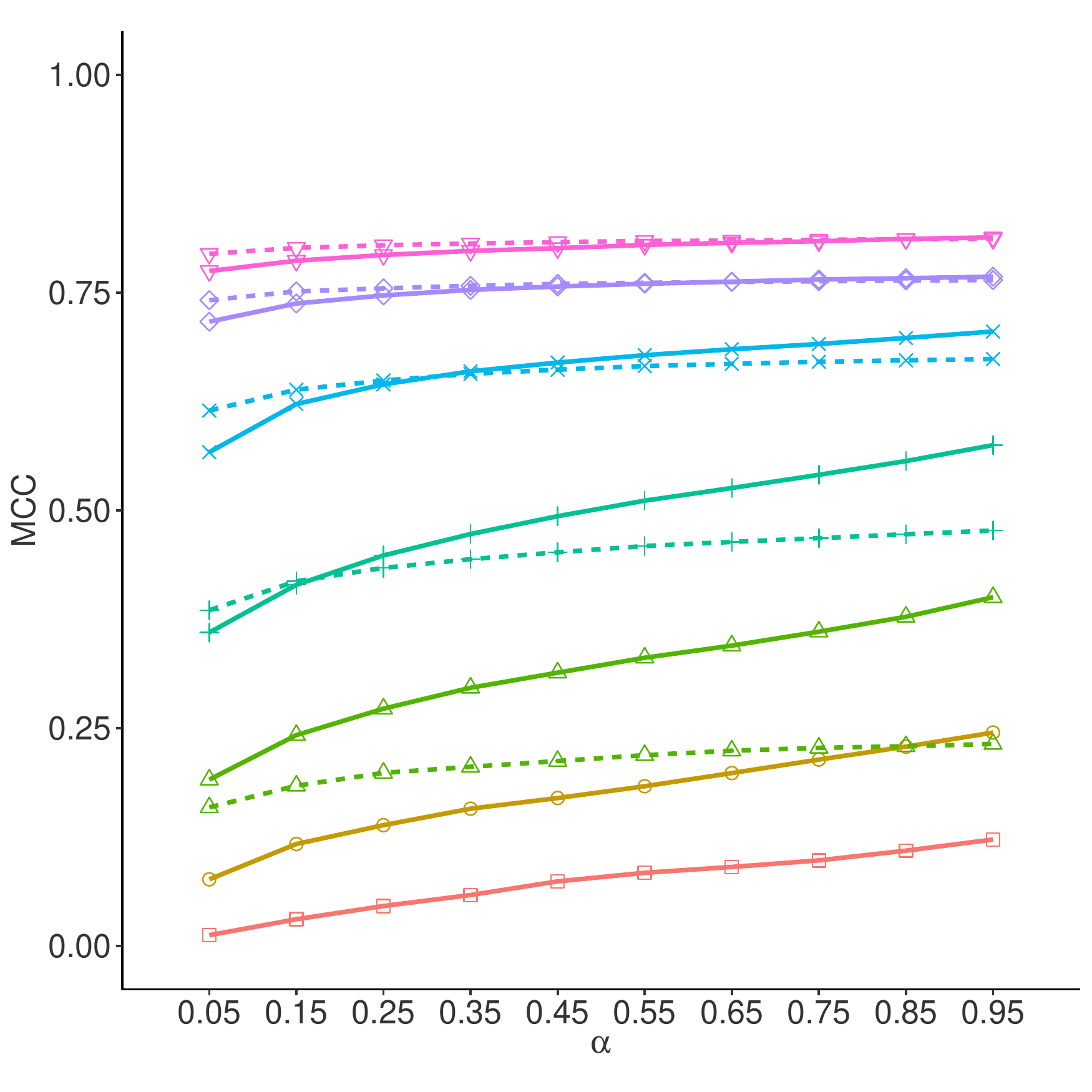}
    \end{subfigure}
    \begin{subfigure}[b]{0.33\textwidth}
         \centering
         \includegraphics[width=\linewidth]{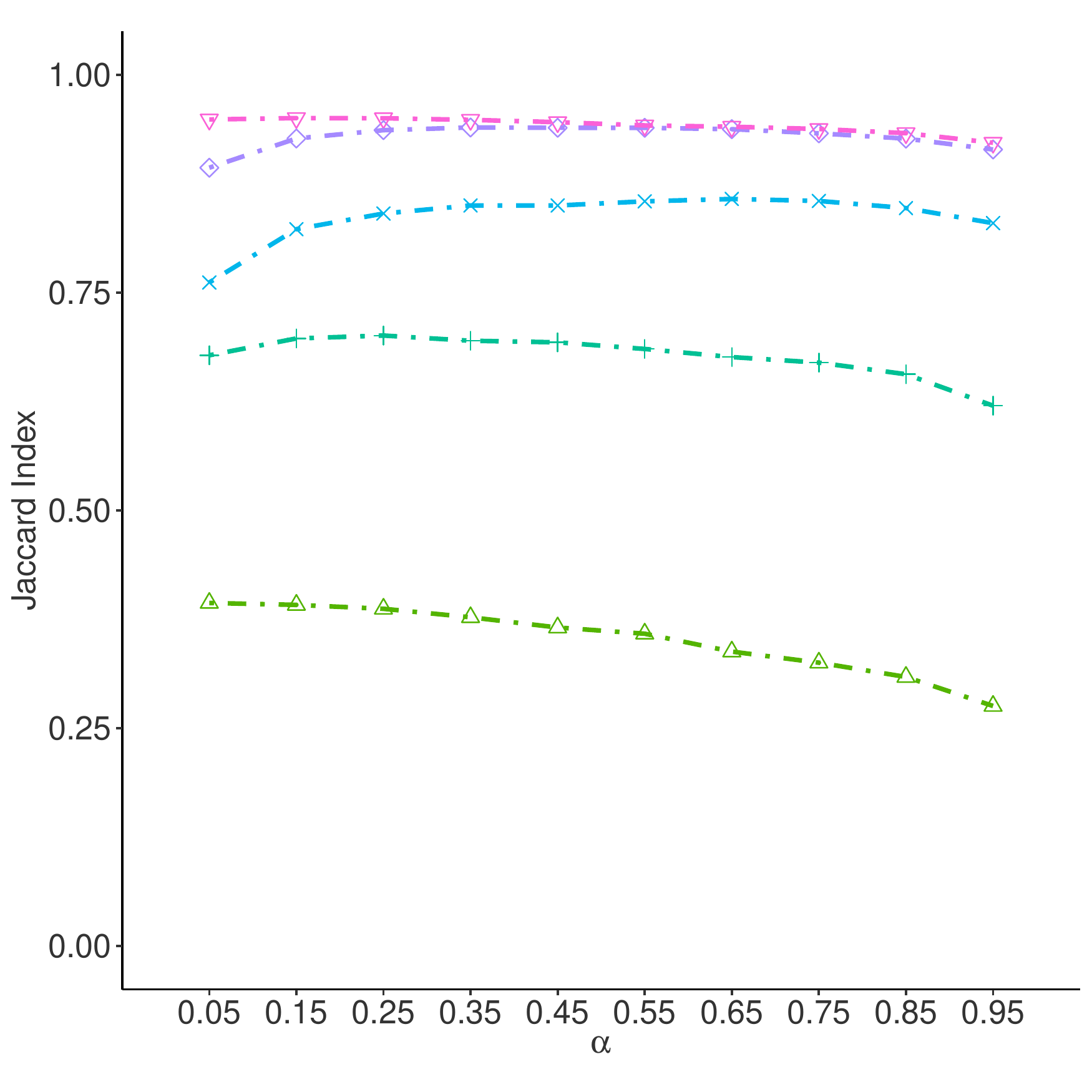}
     \end{subfigure}
\caption{
Observed family-wise error rate (top-left), True Positive Rate (top-middle), False Positive Rate (top-right), Matthews Correlation Coefficient (bottom-left), and Jaccard index of similarity (bottom-right) evaluated from graphs estimated with RobSel with graphical lasso and Holm-based multiple testing method. Note that Holm-based method is not applicable when $n\leq d=100$. All traces represent average quantities over 200 datasets. 
}
\label{fig:FWER_GLasso}
\end{figure}

In applications, the finite sample behavior the FWER characteristic whose asymptotic properties are given in Theorem \ref{thm-FWER} is of practical interest. In this section, simulation studies are used to verify FWER of graph reconstruction when using RobSel with graphical lasso. Also, the FWER of a testing-based graphical model selection from \cite{Drton-Perlman} is given as a comparison. (Readers are referred to \cite{cisneros20a} for comparison to cross-validation procedure.)

The true precision matrix $\Omega\in\Sdpp$ used to generate the simulated data has been constructed as follows. First, generate an adjacency matrix of an undirected {Erd\H{o}s-Renyi} graph with equal edge probability of 0.02 discarding any self-loops. Then, the weight of each edge (the magnitude of the non-zero element) is sampled uniformly between $[0.5, 1]$, and the sign of each non-zero element is set to be positive or negative with equal probability of 0.5. The resulting matrix is made diagonally dominant by following a procedure described in~\citep{Peng2009}, which ensures that the resulting matrix $\Omega$ is positive definite with ones on the diagonal. Finally, the diagonal entries of $\Omega$ are resampled uniformly between $[1, 1.5]$. Throughout this numerical study section, one randomly generated instance of sparse matrix $\Omega$ with $d=100$ variables is fixed. Using this $\Omega$, a total of $N=200$ datasets for each sample size $n\in\{50,100,200,400,800,1600,3200\}$ were generated independently from a multivariate zero-mean Gaussian distribution, i.e., $\mathcal{N}(\vect{0}_d,\Omega^{-1})$.

To evaluate the selected models, family-wise error rate (FWER), true positive rate (TPR), false positive rate (FPR),  Matthews correlation coefficient (MCC), and Jaccard index were used as performance metrics. These metrics are derived from elements in the confusion matrix, true positives (TP), true negatives (TN), false positives (FP) and false negatives (FN), where a positive indicates an estimated presence of an edge (two non-zero entries in $\Omega$). In this setting, family-wise error rate is the probability of any false edge detection: $FWER = \mathbf{1}( FP > 0$).  
True positive rate is the proportion of edges in true graph $G$ that are correctly identified in the estimated graph: $TPR=\frac{TP}{TP+FN}$. False positive rate is the proportion of nonedges in true graph $G$ that are incorrectly identified as edges in the estimated graph: $FPR=\frac{FP}{FP+TN}$. Matthews correlation coefficient summarizes all count in confusion matrix to measure quality of graph recovery performance: $MCC = \frac{TP\cdot TN - FP \cdot FN}{(TP + FP)(TP+FN)(TN+FP)(TN+FN)}$.
Jaccard index measure the similarity between two edge sets $E_A$ and $E_B$: $J(E_A,E_B)=\frac{|E_A \cap E_B|}{|E_A \cup E_B|}$, and, by convention, Jaccard index of two empty sets is defined to be one, i.e., $J(\emptyset,\emptyset) = 1$.

Figure \ref{fig:FWER_GLasso} shows the FWER, TPR, FPR, MCC, and Jaccard index of the estimated graphs from both Holm's multiple testing method and the graphical lasso with RobSel criterion. TPR increases as sample size increases; however, for each sample size, both method have similar TPR, but RobSel appears to be more conservative at small significant levels since it tends to have smaller TPR and FWER. For larger $\alpha$, RobSel is less conservative with higher TPR while its FWER still bounded by $\alpha$. Figure \ref{fig:FWER_GLasso} also show the average Jaccard index from 200 simulations at 5 different sample size and 10 different levels $\alpha$. It can be seen that Jaccard index increases as sample size increases indicating the estimated graphs from both RobSel and Holm-based multiple testing method become increasingly similar.

Figure \ref{fig:estimated_graph} illustrates a striking similarity between graphical lasso tuned with RobSel and testing-based graphs for large $n$. Most edges appear in both graphs and both graphs do not contain any false positive edge owing to the stringent significance level. On the other hands, graphical lasso tuned with cross-validation have many false positive edges. These qualitative observations were typical in our numerical simulations when data were generated from multivariate normal distributions across a wide range of sample sizes we considered.

\begin{figure}[Ht!]
     \centering
     \begin{subfigure}[b]{0.24\textwidth}
         \centering
         \includegraphics[width=\textwidth]{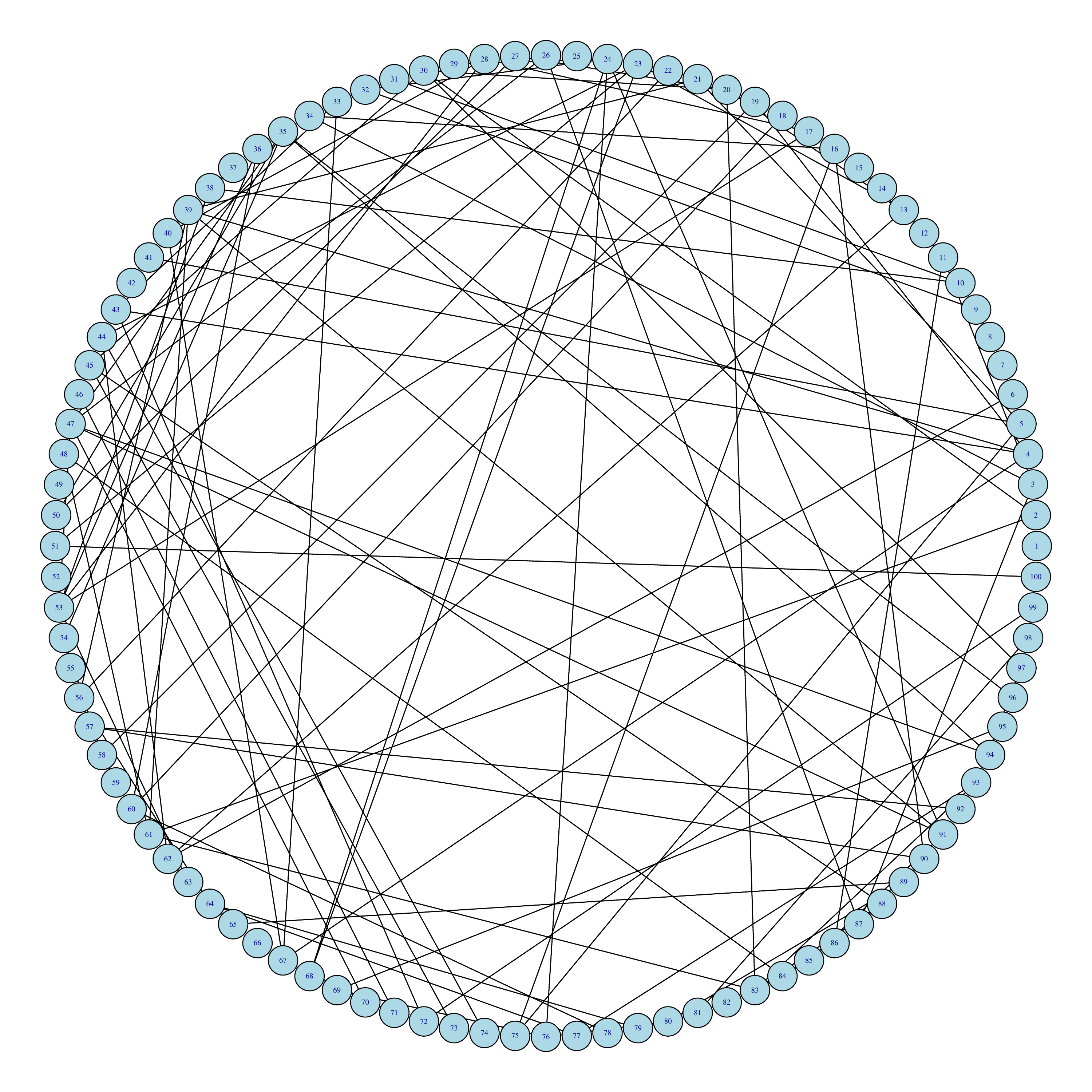}
         \caption{Ground Truth}
     \end{subfigure}
     \begin{subfigure}[b]{0.24\textwidth}
         \centering
         \includegraphics[width=\textwidth]{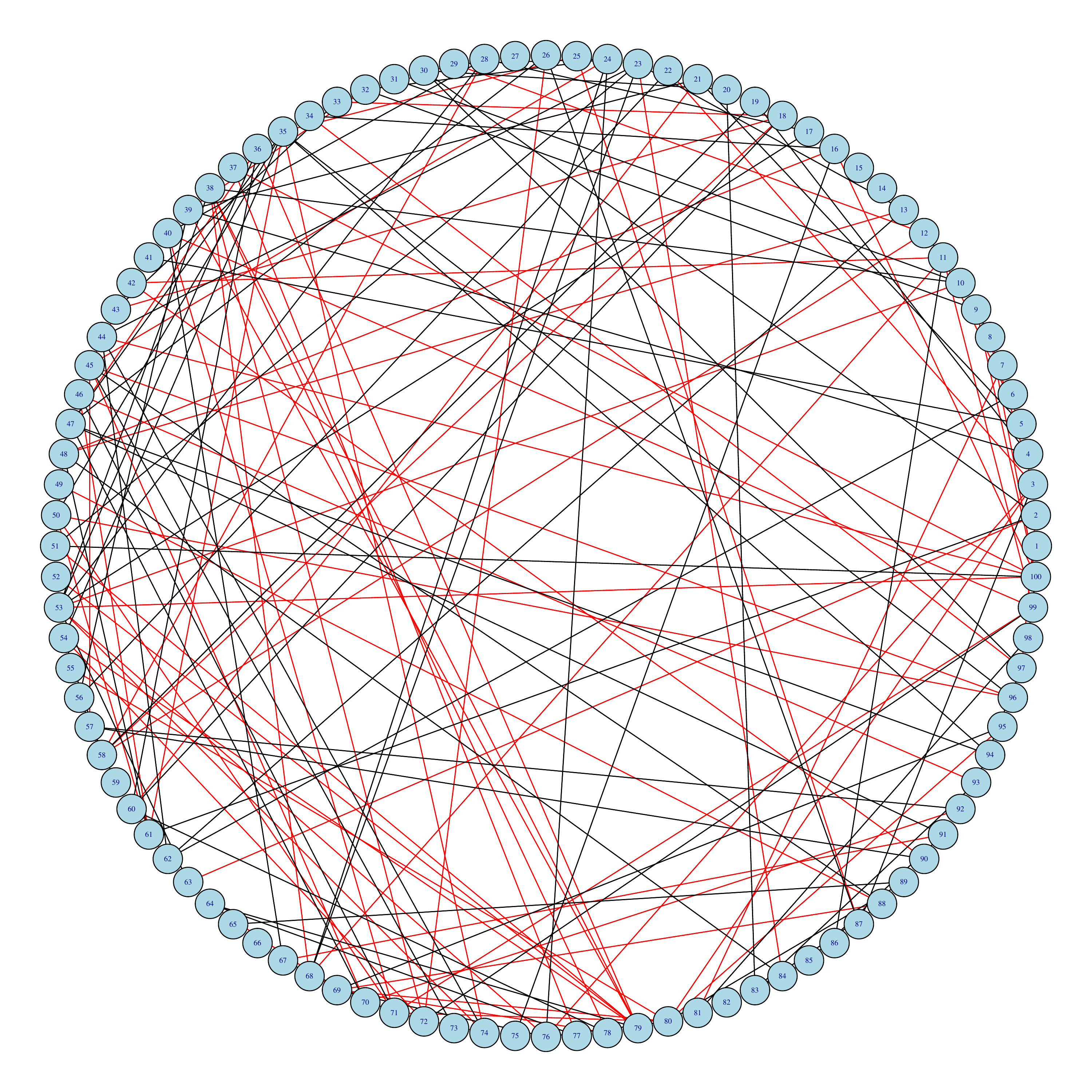}
         \caption{Graphical lasso tuned with cross-validation}
     \end{subfigure}
     \begin{subfigure}[b]{0.24\textwidth}
         \centering
         \includegraphics[width=\textwidth]{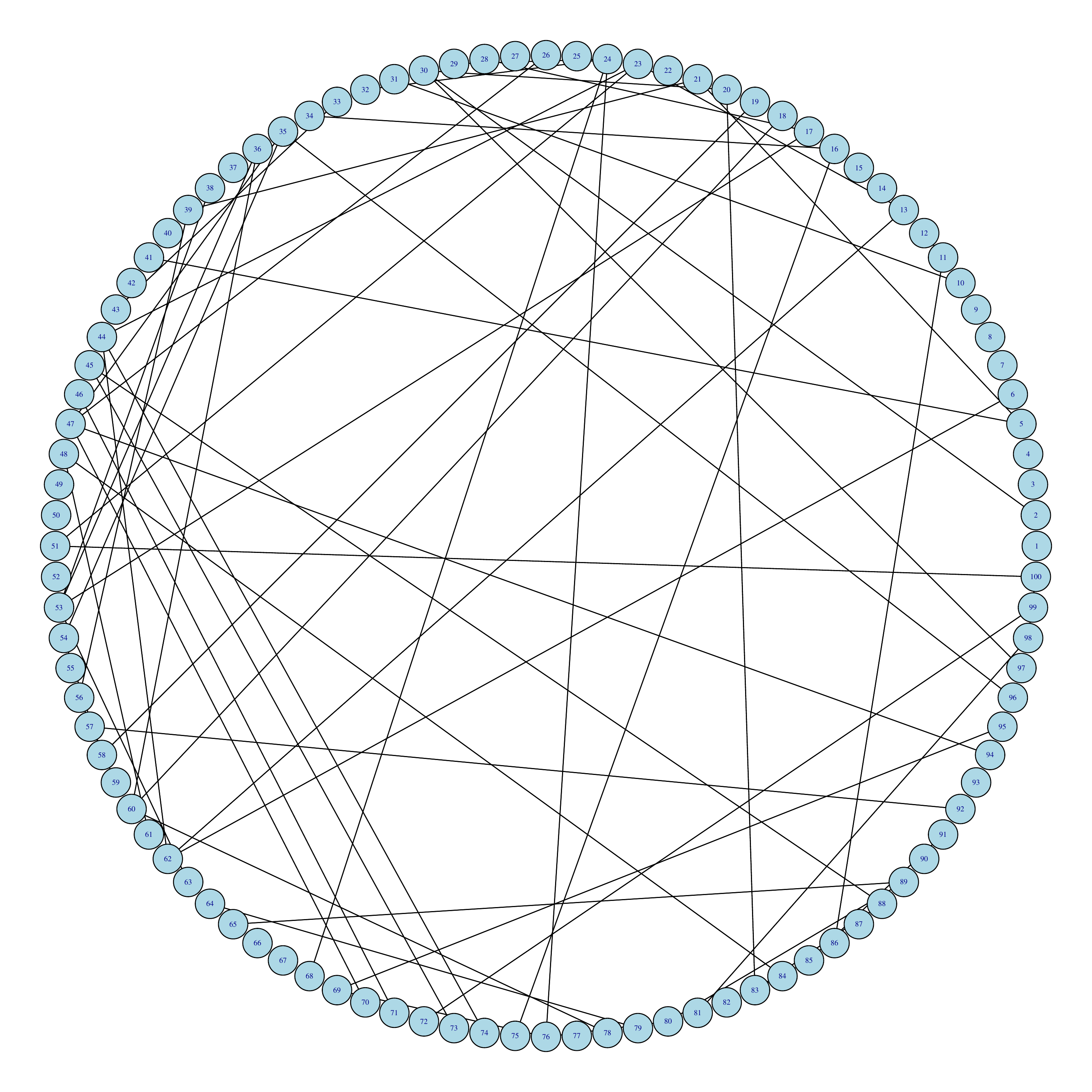}
         \caption{Graphical lasso tuned with RobSel ($\alpha=0.05$)}
     \end{subfigure}
     \begin{subfigure}[b]{0.24\textwidth}
         \centering
         \includegraphics[width=\textwidth]{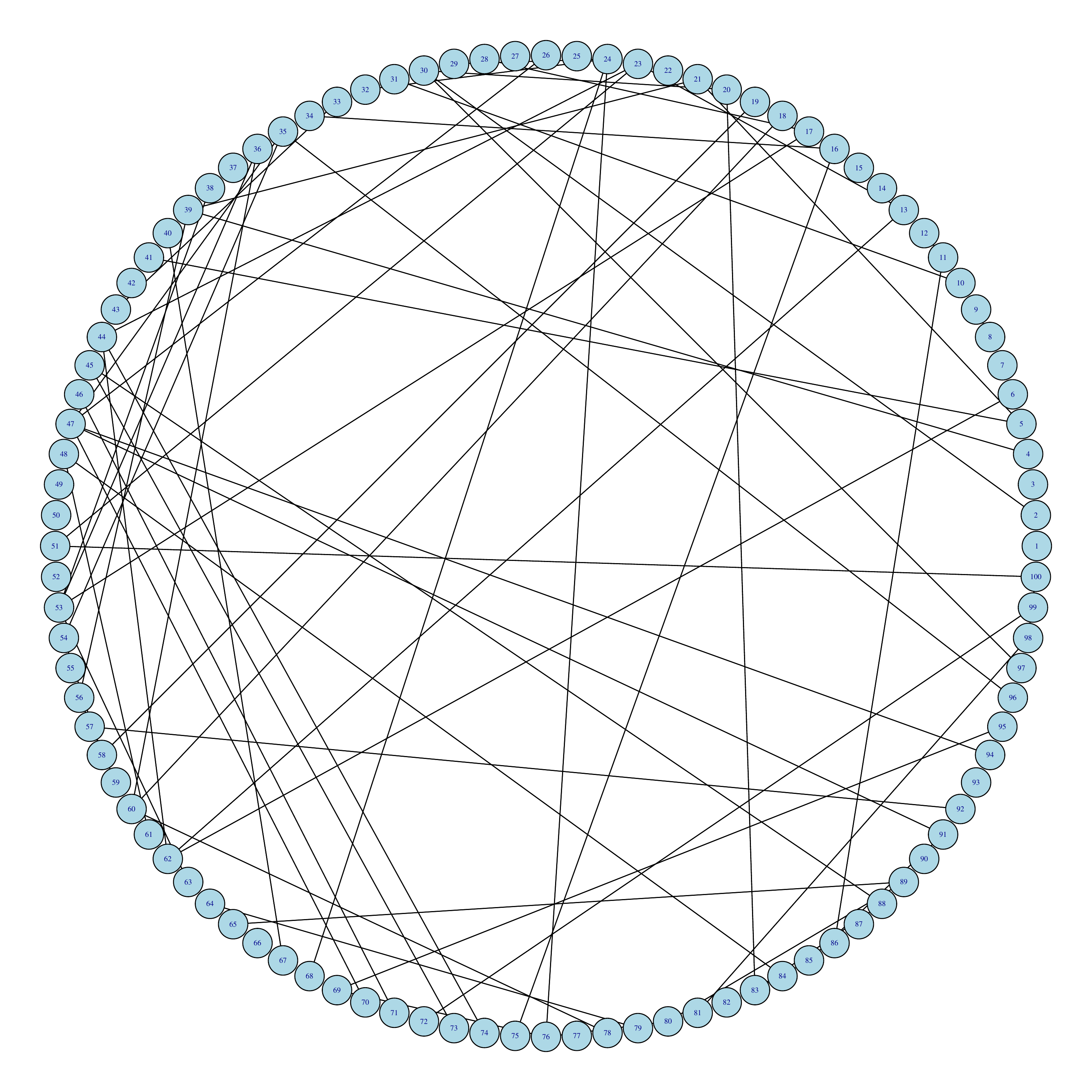}
         \caption{Holm-based multiple testing ($\alpha=0.05$)}
     \end{subfigure}
        \caption{True and three estimated graphs from a dataset with $n=3200$. Red edges denote False Positive edges. }
        \label{fig:estimated_graph}
\end{figure}

\subsection{Application to gene regulatory network reconstruction}\label{sec:gene}

Here, we infer gene regulatory networks from real datasets provided for the DREAM5 transcriptional network inference challenge \citep{Marbach2012}. We reconstructed the networks of interactions among transcription factors (TF). TF-encoding genes usually act as hub-genes with large numbers of interactions with other genes (\cite{JMLR:v15:tan14b}). Thus, identifying interactions between TFs may help researchers better understand the relationships between different groups of genes. The \emph{in silico} dataset contains $d=195$ transcription factors on $n=805$ arrays. The \emph{Escherichia coli} (E. coli) dataset contains $d=334$ transcription factors on $n=805$ arrays. The \emph{Saccharomyces cerevisiae} (S. cerevisiae) dataset contains $d=333$ transcription factors on $n=536$ arrays. To evaluate the inferred networks, we validated the edges in estimated graphical models against experimentally validated interactions given in \cite{Marbach2012}.

Graphical models were constructed using graphical lasso tuned with three different regularization parameter selection approaches as well as the using the Holm-corrected testing method described in Section \ref{sec2}. The regularization parameter tuning approaches we considered were as follows. The first is \emph{Robust Selection (RobSel)}, with $B=200$ sets of bootstrap samples. The second is \emph{$5$-fold cross-validation (CV)} procedure, where the performance on the validation set is the evaluation of the graphical loss function under the empirical measure of the samples on the training set. The third is extended Bayesian information criterion (EBIC) proposed in~\citep{RFBarber2010}. CV and EBIC are evaluated on the same grid of $\lambda$, which are ten logarithmically spaced values in the interval $(0.05s_{\max},s_{\max}]$ with $s_{\max}$ being the minimal value of regularization that gives an empty graph: i.e., setting $\lambda=s_{\max}$ for graphical lasso returning a diagonal matrix $\Omega$. Note that increasing the number of $\lambda$ values on the grid increases computational time. 

Because DRO framework minimizes worst case expected loss, specifying a small error tolerance $\alpha$ for RobSel often results in a graph with very few edges being estimated especially when analyzing a real dataset. In practice, a larger $\alpha$ might be beneficial in order to estimate graphs with more edges. Note, however, that setting a $\lambda$ corresponding to a large $\alpha$ when using graphical lasso would still return a very sparse graph. In our analyses, RobSel was specified with $\alpha=0.9$, EBIC with parameter $\gamma = 0.5$, and 5-fold for cross-validation. EBIC criterion has the following form:
\begin{align}
    \text{EBIC}_\gamma(E) = -2\mathcal{L}(\hat{\Omega}(E)) + \abs{E}\log n + \gamma4\abs{E}\log d, 
\end{align}
where $E$ is the edge set of a candidate graph implied by $\hat{\Omega}$, and $\mathcal{L}(\hat{\Omega}(E))$ denotes the maximized log-likelihood function of the associated model.

Table \ref{table:dream} show the number of edges in the estimated graph, number of validated edges \citep[interactions found in][]{Marbach2012}, the ratio of validated edge counts to total edge counts, and the wall clock times. In our results, an estimated edge (i.e. gene interaction) is a true positive if it is experimentally validated interaction in the database, i.e. in \cite{Marbach2012}. We can see that for all three data sets, RobSel appears to be faster than EBIC and CV with similar discovery ratios. Between E. coli and S. cerevisiae data sets, computational time for RobSel decreases when sample size decreases, but computational times of both EBIC and CV increase. Even though we used RobSel with $\alpha=0.9$ to get a denser graph, the estimated graph by RobSel are still much sparser than EBIC and CV.
\begin{table}[Ht!]
\centering
\begin{tabular}{ |c|c|c|c|c|c| } 
\hline
Dataset & Method & Estimated edges & Validated edges & Validated Edge Proportion & Time(s) \\
\hline
\multirow{4}{7em}{In silico} & Holm & 289 & 63 & {0.2184} & {0.088} \\ 
& RobSel & 693 & 89 & 0.1284 & 0.467 \\ 
& EBIC & 1237 & 108  & 0.0873& 1.566 \\ 
& CV & 7241 & {168}  & 0.0232& 8.611 \\ 
\hline
\multirow{4}{7em}{E. coli} & Holm & 269 & 14 & {0.0520} & {0.166} \\ 
& RobSel &  3479 & 22 & 0.0063 & 3.355 \\ 
& EBIC & 6599 & 37  & 0.0056& 10.46 \\ 
& CV & 10770 & {43}  & 0.0040& 52.92 \\ 
\hline
\multirow{4}{7em}{S. cerevisiae} & Holm & 56 & 3 & {0.0536} & {0.149} \\ 
& RobSel & 4259 & 46 & 0.0108 & 2.728 \\ 
& EBIC & 7731 & 70  & 0.0091&  17.80 \\ 
& CV & 11367 & {93}  & 0.0082& 85.64 \\ 
\hline
\end{tabular}
\caption{Graph recovery results and computational times in seconds from the DREAM5 datasets for three methods, Holm's testing procedure with $\alpha = 0.9$, RobSel with $\alpha = 0.9$, extended BIC (EBIC) with $\gamma = 0.5$, and 5-fold cross-validation (CV).}
\label{table:dream}
\end{table}

\section{Discussion}\label{discussion}
We made a theoretical connection between significant level $\alpha$ from RobSel and family-wise error rate of estimating any false positive edges when RobSel is used to tune graphical lasso. Furthermore, the asymptotic FWER control property is tested in finite sample using simulation experiments. The similarity between Holm-testing method and RobSel tuned graphical lasso solutions when using the same significance level $\alpha$ give users practical insight about the behavior of graphical lasso: graphical lasso regularization can be chosen according to a user specified FWER level. 


\section*{Acknowledgments}
This is acknowledgment. Provide text here. This is acknowledgment text. Provide text here. This is acknowledgment text. 

\subsection*{Author contributions}

This is an author contribution text. 

\subsection*{Financial disclosure}

None reported.

\bibliography{main}%


\end{document}